\newtheorem{thm}{Theorem}
\newtheorem{lem}{Lemma}
\theoremstyle{definition}
\newtheorem{defn}{Definition}
\newtheorem{rem}{Remark}
\def\BibTeX{{\rm B\kern-.05em{\sc i\kern-.025em b}\kern-.08em
    T\kern-.1667em\lower.7ex\hbox{E}\kern-.125emX}}
\newenvironment{sketchproofachievable}{%
   \proof}{\endproof}
\newenvironment{proofconverse}{%
  \proof}{\endproof}
\definecolor{NYUviolet}{HTML}{57068c} 	
\definecolor{NYUlight}{HTML}{8900e1} 	
\definecolor{NYUdark}{HTML}{330662} 	
\definecolor{NYUnight}{HTML}{220337} 	
\begin{document}

\title{Database Matching Under Adversarial Column Deletions\\
\thanks{A shorter version of this paper was published in Proc. IEEE Information Theory Worksop (ITW), Saint-Malo, France, April 2023. This work is supported by National Science Foundation grants 1815821 and 2148293.}}

\author{Serhat Bakirtas, Elza Erkip\\
 NYU Tandon School of Engineering\\
Emails: \{serhat.bakirtas, elza\}@nyu.edu}

\maketitle

\begin{abstract}
The de-anonymization of users from anonymized microdata through matching or aligning with publicly-available correlated databases has been of scientific interest recently. While most of the rigorous analyses of database matching have focused on random-distortion models, the adversarial-distortion models have been wanting in the relevant literature. In this work, motivated by synchronization errors in the sampling of time-indexed microdata, matching (alignment) of random databases under adversarial column deletions is investigated. It is assumed that a constrained adversary, which observes the anonymized database, can delete up to a $\delta$ fraction of the columns (attributes) to hinder matching and preserve privacy. Column histograms of the two databases are utilized as permutation-invariant features to detect the column deletion pattern chosen by the adversary. The detection of the column deletion pattern is then followed by an exact row (user) matching scheme. The worst-case analysis of this two-phase scheme yields a sufficient condition for the successful matching of the two databases, under the near-perfect recovery condition. A more detailed investigation of the error probability leads to a tight necessary condition on the database growth rate, and in turn, to a single-letter characterization of the adversarial matching capacity. This adversarial matching capacity is shown to be significantly lower than the \say{random} matching capacity, where the column deletions occur randomly. Overall, our results analytically demonstrate the privacy-wise advantages of adversarial mechanisms over random ones during the publication of anonymized time-indexed data.
\end{abstract}

\section{Introduction}
\label{sec:introduction}
With the ever-increasing popularity of smartphones, IoT devices, and big data applications, the user data gathered by companies and institutions has been growing as well. This user-level microdata is then published or shared for scientific and/or commercial purposes, after \emph{anonymization} which refers to the removal of any explicit identifiers. However, concerns over the insufficiency of simple anonymization have been articulated by the scientific~\cite{ohm2009broken} and corporate~\cite{bigdata} communities. These concerns were further validated and amplified as researchers devised practical privacy attacks on real data~\cite{naini2015you,datta2012provable,narayanan2008robust,sweeney1997weaving,takbiri2018matching} to show the vulnerability of anonymization on its own.

In the light of the above practical privacy attacks on databases, several groups initiated rigorous analyses of the database matching problem which has applications beyond privacy, such as image processing~\cite{sanfeliu2002graph}, computer vision~\cite{galstyan2021optimal}, single-cell biological data alignment~\cite{zhu2021robust,tran2020benchmark} and DNA sequencing, which is shown to be equivalent to matching bipartite graphs~\cite{blazewicz2002dna}. Matching of correlated databases has also been rigorously investigated from information-theoretic and statistical perspectives~\cite{cullina,shirani8849392,dai2019database,kunisky2022strong,tamir2022joint,zeynepdetecting2022,bakirtas2021database,bakirtas2022matching,bakirtas2022seeded}. In \cite{cullina}, Cullina \emph{et al.} derived sufficient conditions for successful matching and a converse result using perfect recovery as the error criterion. In~\cite{shirani8849392}, Shirani \emph{et al.} considered a pair of anonymized and obfuscated databases and derived necessary and sufficient conditions on the \emph{database growth rate} for reliable matching, in the presence of noise on the database entries, under near-exact recovery criterion. In~\cite{dai2019database,kunisky2022strong,tamir2022joint}, the matching of a pair of databases with jointly-Gaussian attributes is considered. In~\cite{tamir2022joint,zeynepdetecting2022}, the necessary and the sufficient conditions for detecting whether two Gaussian databases are correlated are investigated.
\begin{figure}[t]
\centerline{\includegraphics[width=0.5\textwidth,trim={0 15cm 2cm 0},clip]{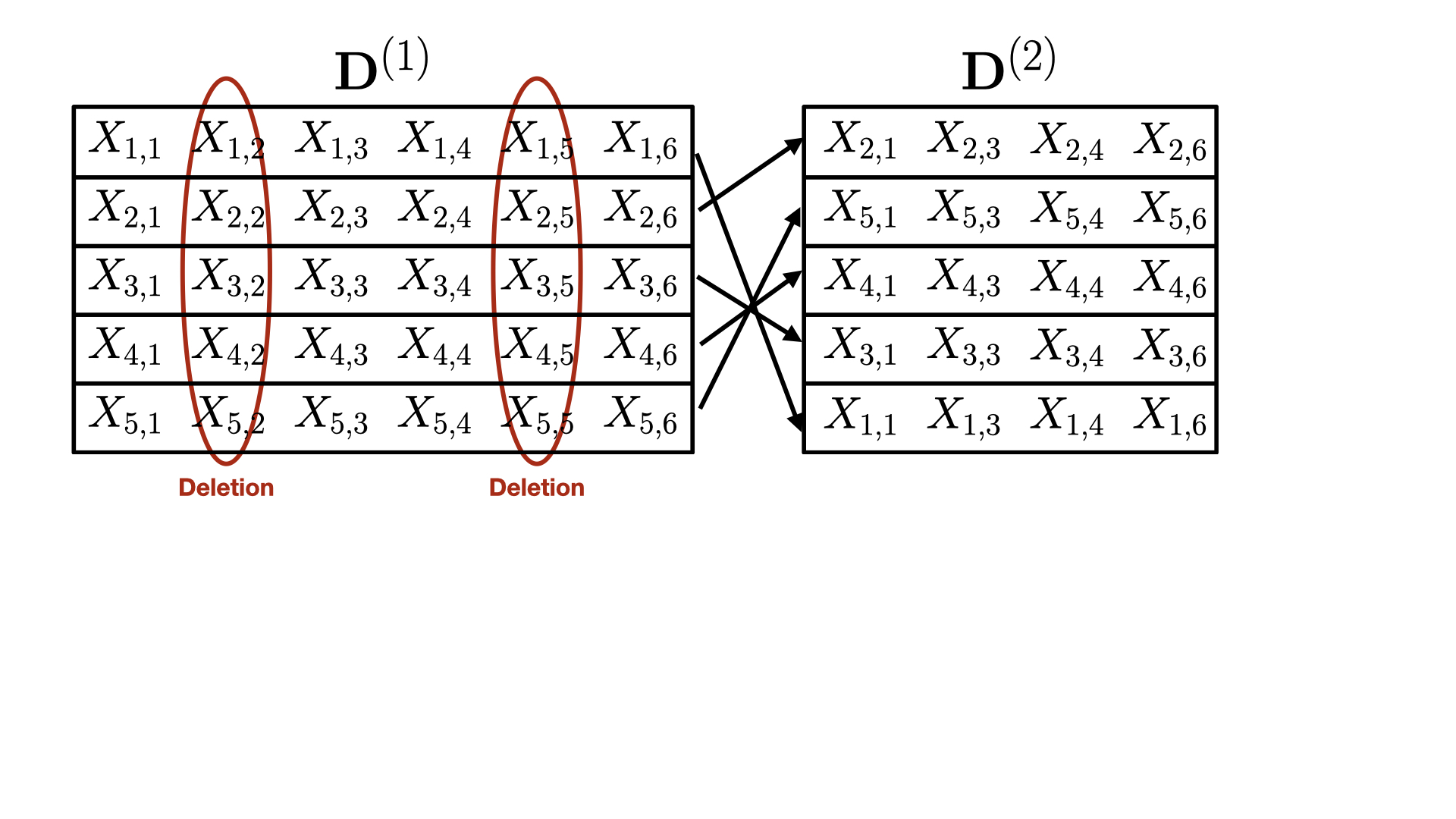}}
\caption{An illustrative example of database matching under column deletions. The columns circled in red are deleted. Our goal is to estimate the row permutation {${\Theta}_n$} which is in this example given as; {$\boldsymbol{\Theta}_n(1)=5$}, {$\boldsymbol{\Theta}_n(2)=1$}, {$\boldsymbol{\Theta}_n(3)=4$},
{$\boldsymbol{\Theta}_n(4)=3$} and {$\boldsymbol{\Theta}_n(5)=2$}, by matching the rows of $\mathbf{D}^{(1)}$ and $\mathbf{D}^{(2)}$, under column deletions with $I_{\text{del}}=(2,5)$. Here the $i$\textsuperscript{th} row of $\mathbf{D}^{(1)}$ corresponds to the {$\Theta_n(i)$\textsuperscript{th}} row of $\mathbf{D}^{(2)}$.}
\label{fig:intro}
\end{figure}

In~\cite{bakirtas2021database,bakirtas2022matching,bakirtas2022seeded}, motivated by the synchronization errors in the sampling of time-series datasets, we investigated the matching of two databases of the same number of users (rows), but with different numbers of attributes (columns). In our model, one of the databases suffers from \emph{random column repetitions}. Under this model, we devised various algorithms to detect the underlying repetition pattern. In~\cite{bakirtas2022seeded}, we showed that in the noisy setting, a batch of seeds whose size $B_n$ grows logarithmic in the number of rows $m_n$ of the database, can be utilized for the detection of deletion locations and replicas can be detected without any seeds. Similarly, in~\cite{bakirtas2022matching}, we showed in the noiseless setting, the repetition detection can be performed without any seeds through a repetition detection algorithm. These repetition detection algorithms were then followed by joint-typicality-based matching schemes which allowed us to derive achievable database growth rates. Then, we proved tight converse results, characterizing the matching capacities of the database matching problem under noiseless and noisy random column repetitions. 

Motivated by potential settings in which a privacy-preserving mechanism denies the sampling of the most informative attributes after observing the anonymized database, our objective in this paper is to investigate the necessary and sufficient conditions for the successful matching of database rows under adversarial column deletions. Unlike the previous work~\cite{cullina,shirani8849392,dai2019database,bakirtas2021database,bakirtas2022matching,bakirtas2022seeded,zeynepdetecting2022,kunisky2022strong,tamir2022joint,chen2022one} where distortions, in the form of noise and/or synchronization errors, are random, we assume a constrained-adversarial model as in channel coding literature~\cite{2627,8849568,kash2011zero,5205859,bassily2014causal}. We assume that synchronization errors, in the form of column deletions, are chosen by the constrained adversary where the constraint is of the form of a fractional column deletion budget. An example of these column deletions is illustrated in Figure~\ref{fig:intro}. We stress that this \say{adversary} here is in fact a privacy-preserving entity whose goal is to hinder matching of the databases. Under this assumption, we improve upon and utilize the histogram-based detection algorithm of~\cite{bakirtas2022matching} and then propose an exact sequence matching algorithm. We note that this adversarial model forces us to focus on the worst-case scenario and in turn, prohibits the use of typicality and Fano's inequality, as done in~\cite{shirani8849392,bakirtas2021database,bakirtas2022matching,bakirtas2022seeded}. Therefore, the Hamming distances between the rows (users) of the databases become crucial in our analyses, as is often the case in the adversarial channel literature~\cite{bassily2014causal}.

The organization of this paper is as follows: We formulate the problem in Section~\ref{sec:problemformulation}. We state our main result on the adversarial matching capacity and prove its achievability part in Section~\ref{sec:mainresult}. Next, we prove the converse part in Section~\ref{sec:converse}. Finally, in Section~\ref{sec:conclusion} the results and ongoing work are discussed.

\noindent{\em Notation:} $[n]$ denotes the set of integers $\{1,...,n\}$. We denote matrices with uppercase bold letters and for a matrix $\mathbf{D}$, its $(i,j)$\textsuperscript{th} entry with $D_{i,j}$. Furthermore, by $A^n$, we denote a row vector consisting of scalars $A_1,\dots,A_n$ and the indicator of event $E$ by $\mathds{1}_E$. $H$ denotes Shannon's entropy~\cite[Chapter 2]{cover2006elements}. The logarithms, unless stated explicitly, are in base $2$. 

\section{Problem Formulation}
\label{sec:problemformulation}

Throughout this work, we utilize the following definitions, some of which are similar to~\cite{shirani8849392,bakirtas2021database,bakirtas2022matching,bakirtas2022seeded}, to formulate our database matching problem. 

\begin{defn}{\textbf{(Unlabeled Database)}}\label{defn:unlabeleddb}
An ${(m_n,n,p_{X})}$ \emph{unlabeled database} is a randomly generated ${m_n\times n}$ matrix ${\mathbf{D}=\{D_{i,j}\in\mathfrak{X}\}}$ with \emph{i.i.d.} entries drawn according to the distribution $p_X$ with a finite discrete support $\mathfrak{X}=\{1,\dots,|\mathfrak{X}|\}$.
\end{defn}

\begin{defn}{\textbf{(Adversary, Column Deletion Pattern)}}\label{defn:deletionpattern}
The \emph{column deletion pattern} $I_{\text{del}}=\{i_1,i_2,...,i_d\}\subseteq [n]$ is a vector consisting of $d$ entries, chosen by the \say{adversary} after observing the unlabeled database $\mathbf{D}$. The parameter $\delta\triangleq {\nicefrac{d}{n}}$ is called the \emph{deletion budget}.
\end{defn}

Different from~\cite{bakirtas2022matching,bakirtas2022seeded} where column repetitions (deletions and replications) are considered, in this work, we focus on a deletion-only setting. This is because the additional replicas either have no effect on the matching performance as in the noiseless case~\cite{bakirtas2022matching} or offer additional information acting as a repetition code of random length in the noisy setting and in turn, boost the matching performance \cite{bakirtas2022seeded}. Hence, it is expected for any privacy mechanism that tries to hinder the matching process not to allow the replication of entries. Therefore in the adversarial repetition setting, it is natural to focus on the deletion-only case. 

Note that the column deletion pattern $I_{\text{del}}$, as described in Definition~\ref{defn:deletionpattern}, is not independent of the unlabeled database $\mathbf{D}$, as assumed in~\cite{bakirtas2021database,bakirtas2022seeded,bakirtas2022matching}. We further assume that deletions occur columnwise, \emph{i.e.,} every row experiences the same column deletion pattern. Here, $I_{\text{del}}$ indicates which columns of $\mathbf{D}$ are deleted. When $j\in I_{\text{del}}$, the $j$\textsuperscript{th} column of $\mathbf{D}$ is said to be \emph{deleted}. Otherwise, it is said to be \emph{retained}.

\begin{figure}[t]
\centerline{\includegraphics[width=0.5\textwidth]{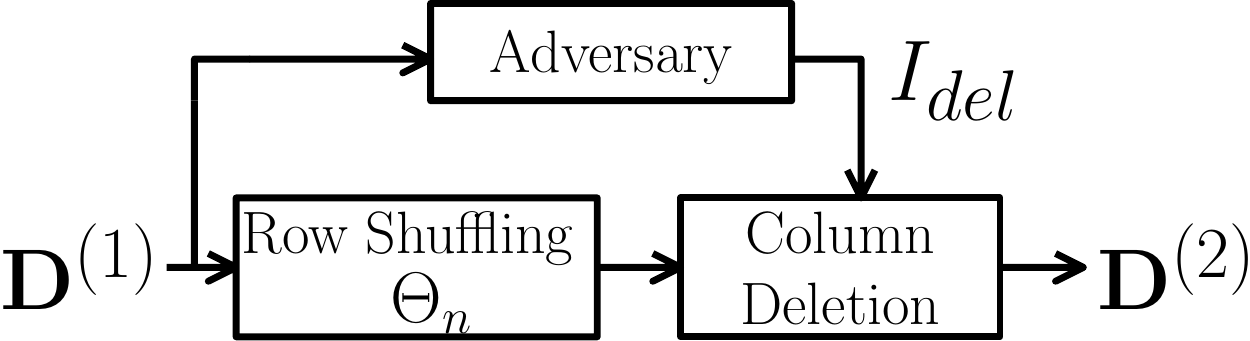}}
\caption{Relation between the unlabeled database $\mathbf{D}^{(1)}$ and the column deleted labeled one, $\mathbf{D}^{(2)}$.}
\label{fig:dmc}
\end{figure}

\begin{defn}{\textbf{(Column Deleted Labeled Database)}}\label{defn:labeleddb}
Let $\mathbf{D}^{(1)}$ be an ${(m_n,n,p_{X})}$ unlabeled database. Let $I_{\text{del}}=(i_1,\dots,i_d)$ be a column deletion pattern, $\boldsymbol{\Theta}_n$ be a uniform permutation of $[m_n]$, independent of $(\mathbf{D}^{(1)},I_{\text{del}})$. Given $\mathbf{D}^{(1)}$ and $I_{\text{del}}$, $\mathbf{D}^{(2)}$ is called the \emph{column deleted labeled database} if the respective $(i,j)$\textsuperscript{th} entries ${D}^{(1)}_{i,j}$ and ${D}^{(2)}_{i,j}$ of $\mathbf{D}^{(1)}$ and $\mathbf{D}^{(2)}$ have the following relation:
\begin{align}
D^{(2)}_{i,j}&=
    \begin{cases}
      E , &  \text{if } j\in I_{\text{del}}\\
      D^{(1)}_{\boldsymbol{\Theta}_n^{-1}(i),j} & \text{if } j\notin I_{\text{del}}
    \end{cases} 
\end{align}
where ${{D}^{(2)}_{i,j}=E}$ corresponds to ${D}^{(2)}_{i,j}$ being the empty string.

The $i$\textsuperscript{th} row of $\mathbf{D}^{(2)}$ is said to correspond to the {$\boldsymbol{\Theta}_n^{-1}(i)$\textsuperscript{th}} row of $\mathbf{D}^{(1)}$, where $\boldsymbol{\Theta}_n$ is called the \emph{labeling function}.
\end{defn}

The relationship between $\mathbf{D}^{(1)}$ and $\mathbf{D}^{(2)}$, as described in Definition~\ref{defn:labeleddb}, is illustrated in Figure~\ref{fig:dmc}. Our main goal is to estimate the labeling function $\Theta_n$ with $\mathbf{D}^{(1)}$ and $\mathbf{D}^{(2)}$ without observing $I_{\text{del}}$. In other words, the deletion locations are unknown.

In this work, we assume that there is no noise on the retained entries after row shuffling and column deletions, as is often done in the synchronization channel literature~\cite{cheraghchi2020overview}. 

Note that in this setting, although the deletions are not random, the matching error event is still random due to the random natures of $\mathbf{D}^{(1)}$ and $\boldsymbol{\Theta}_n$. Furthermore, since the deletion indices are chosen in an adversarial fashion, we adopt a worst-case near-exact recovery performance metric in the following definition:
\begin{defn}{\textbf{(Successful Matching Scheme)}}
A \emph{matching scheme} is a sequence of mappings {${\phi_n: (\mathbf{D}^{(1)},\mathbf{D}^{(2)})\mapsto \hat{\boldsymbol{\Theta}}_n }$} where $\mathbf{D}^{(1)}$ is the unlabeled database, $\mathbf{D}^{(2)}$ is the column deleted labeled database and $\hat{\boldsymbol{\Theta}}_n$ is the estimate of the correct labeling function $\boldsymbol{\Theta}_n$. The scheme $\phi_n$ is said to be \emph{successful} against an adversary with a $\delta$-deletion budget, if 
\begin{align}
    \Pr(\forall I_{\text{del}}=(i_1,\dots,i_{n\delta})\subseteq [n], \hat{\Theta}_n(J)\neq \Theta_n(J)) &\overset{n\to\infty}{\longrightarrow} 0   \label{eq:proberror}
\end{align}
where the index $J$ is drawn uniformly from $[m_n]$ and the dependence of the matching scheme $\hat{\Theta}_n$ on the column deletion index set $I_{\text{del}}$ is omitted for brevity.
\end{defn}

We stress that both in database matching and correlation detection settings, the relationship between the row size $m_n$, the column size $n$ and the database distribution parameters are the parameters of interest~\cite{kunisky2022strong,zeynepdetecting2022,tamir2022joint}. Note that as the row size $m_n$ increases for fixed column size $n$, matching becomes harder. This is because for a given column size $n$, as the row size $m_n$ increases, so does the probability of mismatch as a result of having a larger candidate row set. Furthermore, as stated in~\cite[Theorem 1.2]{kunisky2022strong}, for distributions with parameters constant in $n$ and $m_n$, the regime of interest is the logarithmic regime where $n\sim \log m_n$. Thus, we utilize the \emph{database growth rate} introduced in~\cite{shirani8849392} to characterize the relationship between the row size $m_n$ and the column size $n$.  

\begin{defn}{\textbf{(Database Growth Rate)}}
The \emph{database growth rate} $R$ of an ${(m_n,n,p_X)}$ unlabeled database is defined as 
\begin{align}
    R&=\lim\limits_{n\to\infty} \frac{1}{n}\log m_n.
\end{align}
\end{defn}

\begin{defn}{\textbf{(Achievable Database Growth Rate)}}\label{defn:achievable}
Consider a sequence of ${(m_n,n,p_X)}$ unlabeled databases, an adversary with a $\delta$-deletion budget and the resulting sequence of column deleted labeled databases. A database growth rate $R$ is said to be \emph{achievable} if there exists a successful matching scheme when the unlabeled database has growth rate $R$.
\end{defn}

\begin{defn}{\textbf{(Adversarial Matching Capacity)}}\label{defn:matchingcapacityadversarial}
The \emph{adversarial matching capacity} $C^{\text{adv}}(\delta)$ is the supremum of the set of all achievable rates corresponding to a database distribution $p_X$ and an adversary with a $\delta$-\emph{deletion budget}.
\end{defn}

In this paper, our main goal is to characterize the adversarial matching capacity $C^{\text{adv}}(\delta)$, by proposing matching schemes and a tight upper bound on all achievable database growth rates. Since we are interested in the supremum of achievable rates, throughout this work, we will assume a positive database growth rate, \emph{i.e.,} $R>0$.

\section{Main Result and Achievability}
\label{sec:mainresult}
In this section, we present our main result on the adversarial matching capacity (Theorem~\ref{thm:mainresult}). We prove the achievability part of Theorem~\ref{thm:mainresult} in this section and the converse part in Section~\ref{sec:converse}.

\begin{thm}{\textbf{(Adversarial Matching Capacity)}}\label{thm:mainresult}
Consider a database distribution $p_X$ and an adversary with a $\delta$-\emph{deletion budget}. Then, the adversarial matching capacity is
\begin{align}
    C^{\text{adv}}(\delta) &= \begin{cases}
    D(\delta\|1-\hat{q}),&\text{if } \delta\le 1-\hat{q}\\
    0, &\text{if } \delta> 1-\hat{q}
    \end{cases}
\end{align}
where $\hat{q} \triangleq \sum_{x\in\mathfrak{X}} p_X(x)^2$ and $D(.\|.)$ denotes the Kullback-Leibler divergence~\cite[Chapter 2.3]{cover2006elements} between two Bernoulli distributions with given parameters.
\end{thm}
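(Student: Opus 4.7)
The plan is to prove Theorem~\ref{thm:mainresult} via a two-phase matching scheme for achievability and a matching converse deferred to Section~\ref{sec:converse}. The achievability side decouples adversarial deletion detection from exact row matching, exploiting the permutation invariance of column histograms and the adversary's fractional budget; the converse leverages the prevalence of Hamming-close row pairs at large growth rates.

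\emph{Phase~1 (Deletion detection).} I would use the columnwise empirical histograms of $\mathbf{D}^{(1)}$ and $\mathbf{D}^{(2)}$ as permutation-invariant features, in the spirit of the histogram-based detection of~\cite{bakirtas2022matching}. Since row permutation preserves histograms, each non-deleted column of $\mathbf{D}^{(2)}$ has exactly the same histogram as its preimage column in $\mathbf{D}^{(1)}$, so $I_{\text{del}}$ can be recovered by order-preserving histogram matching. Correctness reduces to all column histograms of $\mathbf{D}^{(1)}$ being distinct, which I expect to verify via a union bound: only $\mathrm{poly}(m_n)$ histograms are possible, two independent i.i.d.\ columns of length $m_n$ collide with probability vanishing in $m_n$, and the $\binom{n}{2}$ pairs therefore yield a vanishing failure probability for any positive $R$.

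\emph{Phase~2 (Exact row matching).} Having recovered $I_{\text{del}}$, declare $\hat{\boldsymbol{\Theta}}_n(k)=j$ whenever row $k$ of $\mathbf{D}^{(1)}$ and row $j$ of $\mathbf{D}^{(2)}$ agree on $[n]\setminus I_{\text{del}}$. The true match always satisfies this, so an error at $k$ requires a spurious $k'\neq k$ whose row in $\mathbf{D}^{(1)}$ matches row $k$ on all $n(1-\delta)$ retained positions. The key structural observation is that such an agreement forces $d_H(\mathrm{row}_k,\mathrm{row}_{k'})\leq n\delta$, since all disagreements must fit inside the adversary's deletion set; crucially this bound is independent of the adversary's strategy. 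Because $d_H\sim\mathrm{Bin}(n,1-\hat q)$ for independent rows, Chernoff yields $\Pr[d_H\le n\delta]\le 2^{-nD(\delta\|1-\hat q)}$ when $\delta<1-\hat q$, and a union bound over the $m_n-1$ wrong candidates gives a per-row error bound of $m_n\cdot 2^{-nD(\delta\|1-\hat q)}$, which vanishes whenever $R<D(\delta\|1-\hat q)$.

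\emph{Converse and main obstacle.} On the converse side, my plan is to show that when $R>D(\delta\|1-\hat q)$ the expected number of rows in $\mathbf{D}^{(1)}$ that possess a Hamming-close neighbour is $\Theta(m_n)$, and then to construct an adversary that selects a single deletion pattern ambiguating a constant fraction of rows simultaneously, forcing the averaged error probability to remain bounded below. The boundary case $\delta\ge 1-\hat q$ collapses to zero capacity because even a random deletion makes a typical row pair collide with high probability. The hardest step will be the converse construction itself: a single global $I_{\text{del}}$ must simultaneously ambiguate $\Omega(m_n)$ rows, yet pairwise Hamming distances sharing a common row are correlated and their disagreement supports are scattered, so extracting one ``universal'' deletion pattern from the mere existence of many close pairs likely requires a method-of-types or combinatorial covering argument. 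Achievability, by comparison, is comparatively clean because the uniform pairwise bound $\Pr[d_H\le n\delta]$ dominates any adversarial choice on a pair-by-pair basis.
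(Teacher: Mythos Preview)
Your achievability plan is essentially the paper's: histogram-based deletion detection followed by exact row matching, with the key reduction that any spurious match between rows $1$ and $i$ forces $d_H(X_1^n,X_i^n)\le n\delta$ regardless of the adversary's choice, then Chernoff on the $\mathrm{Binom}(n,1-\hat q)$ Hamming distance and a union bound over the $m_n-1$ competing rows.

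Where you diverge is the converse, and the ``main obstacle'' you isolate turns out not to exist. You anticipate having to exhibit one global $I_{\text{del}}$ that simultaneously ambiguates a constant fraction of the $m_n$ rows, which would indeed require a delicate covering argument over scattered disagreement supports. The paper avoids this entirely by exploiting the structure of the error criterion: success is measured on a single uniformly drawn row $J$, with the quantifier over $I_{\text{del}}$ sitting inside the probability. By row exchangeability one may fix $J=1$, and then it suffices that \emph{some} deletion pattern makes row~$1$ unrecoverable; the adversary never needs one pattern to harm many rows at once. The converse therefore reduces to lower-bounding
\[
\Pr\bigl(\exists\, i\neq 1:\ d_H(X_1^n,X_i^n)\le n\delta\bigr)=1-\bigl[1-\Pr(d_H(X_1^n,X_2^n)\le n\delta)\bigr]^{m_n-1},
\]
since whenever this event occurs the adversary can take $I_{\text{del}}$ to be the disagreement set of the offending pair and no scheme can disambiguate. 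The remainder is purely analytic: the reverse Chernoff bound $\Pr(d_H\le n\delta)\ge (2n)^{-1/2}2^{-nD(\delta\|1-\hat q)}$ for $\delta\le 1-\hat q$, the inequality $(1-x)^{m}\le e^{-xm}$, and the observation that $P_e\to 0$ forces $m_n\,2^{-nD(\delta\|1-\hat q)}/\sqrt{2n}\to 0$, i.e.\ $R\le D(\delta\|1-\hat q)$. The regime $\delta>1-\hat q$ then follows by monotonicity of $C^{\text{adv}}(\delta)$ in $\delta$. Your route might be completable, but it is strictly harder than necessary; reading the error criterion per-row is what turns the converse from a combinatorial covering problem into a two-line calculation.
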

Before proceeding with the proof of Theorem~\ref{thm:mainresult}, we first compare the matching capacities under adversarial column deletions and under random column deletions, as characterized in~\cite{bakirtas2022matching}.

Note that using~\cite[Theorem 1]{bakirtas2022matching}, we can argue that when each column is deleted independently with probability $\delta$, independent of the unlabeled database $\mathbf{D}^{(1)}$, the \say{random} matching capacity becomes
\begin{align}
    C^\text{random}(\delta) = (1-\delta) H(X).
\end{align}

The matching capacities for random and adversarial deletions as a function of the deletion probability/budget are illustrated in Figure~\ref{fig:adversarialcapacity}. For $\delta\le 1-\hat{q}$, the matching capacity is significantly reduced when the column deletions are adversarial rather than random. Furthermore for $\delta>1-\hat{q}$, the $C^{\text{adv}}(\delta)=0$ whereas ${C^\text{random}(\delta) = (1-\delta) H(X)>0}$, suggesting that for a deletion budget/probability $\delta>1-\hat{q}$, successful matching with a positive database growth rate is possible only when the deletions are random.

\begin{figure}[t]
\centerline{\includegraphics[width=0.45\textwidth]{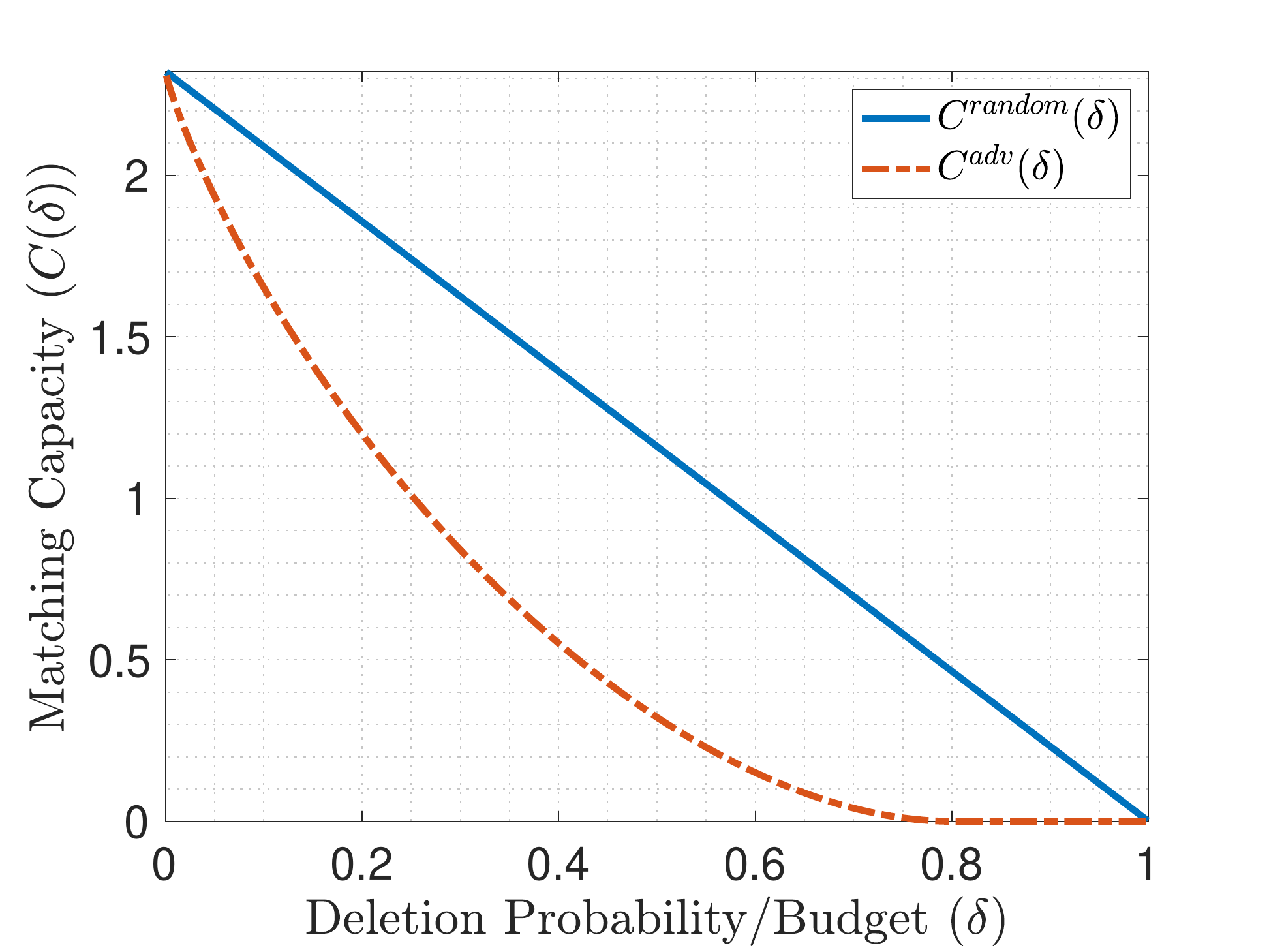}}
\caption{Matching capacities $C$ vs. deletion probability/budget ($\delta$) when $X\sim \text{Unif}(\mathfrak{X})$, $\mathfrak{X}=[5]$. Notice that in this case $\hat{q}=0.2$ and for $\delta>1-\hat{q}=0.8$ the adversarial matching capacity $C^{\text{adv}}(\delta)$ is zero, while the random matching capacity $C^\text{random}(\delta)$ is positive.}
\label{fig:adversarialcapacity}
\end{figure}

The rest of this section is on the proof of the achievability part of Theorem~\ref{thm:mainresult}. In Section~\ref{subsec:deletiondetection}, we discuss our \emph{histogram-based} deletion detection algorithm which is a modified version of the one used in~\cite{bakirtas2022matching} and prove a stronger asymptotic performance than in~\cite{bakirtas2022matching}. Then, in Section~\ref{subsec:matchingscheme}, we prove the achievability of Theorem~\ref{thm:mainresult} through the utilization of the histogram-based detection algorithm and exact sequence matching.

\subsection{Histogram-Based Deletion Detection}\label{subsec:deletiondetection}
We propose to detect the deletions by extracting permutation-invariant features of the columns of $\mathbf{D}^{(1)}$ and $\mathbf{D}^{(2)}$, similar to~\cite{bakirtas2022matching,bakirtas2022seeded}. Our histogram-based deletion detection algorithm works as follows: First, we construct the histogram matrices $\mathbf{H}^{(1)}$ and $\mathbf{H}^{(2)}$ where the $j$\textsuperscript{th} column $H^{(r)}_j$ of $\mathbf{H}^{(r)}$ denotes the histogram of the $j$\textsuperscript{th} column of $\mathbf{D}^{(r)}$, $r=1,2$. More formally, for $r=1,2$ we have
\begin{align}
    {H}^{(r)}_{i,j} &\triangleq \sum\limits_{t=1}^{m_n} \mathbbm{1}_{\left[{D}^{(r)}_{t,j}= i \right]},\forall j
\end{align}
where $K_n$ denotes the column size of $\mathbf{D}^{(2)}$.

Next, we find the estimate $\hat{I}_{\text{del}}$ the column deletion pattern $I_{\text{del}}$ as follows: We start with the initialization $\hat{I}_{\text{del}}=\varnothing$. Then for all $j\in[n]$, if the $j$\textsuperscript{th} column $H^{(1)}_j$ of $\mathbf{H}^{(1)}$ is absent in $\mathbf{H}^{(2)}$, we announce the $j$\textsuperscript{th} column of $\mathbf{D}^{(1)}$ to be deleted, assigning $\hat{I}_{\text{del}}\leftarrow\hat{I}_{\text{del}}\cup {j}$. Otherwise, we infer that the $j$\textsuperscript{th} column of $\mathbf{D}^{(1)}$ is retained.

Observe that the only possibility of an error in the procedure above is when $H^{(1)}_{i}=H^{(1)}_{j}$ for some $i,j\in[n]$ with $i\in I_{\text{del}}$ and $j\notin I_{\text{del}}$. Therefore as long as $\smash{H^{(1)}_{j}}$ are unique, our deletion detection algorithm is error-free.

In the following lemma, we derive a sufficient condition on the relationship between $m_n$ and $n$ for the asymptotic uniqueness of the column histograms.
\begin{lem}{\textbf{(Asymptotic Uniqueness of the Histograms)}}\label{lem:histogram}
Let $H^{(1)}_j$ denote the histogram of the $j$\textsuperscript{th} column of $\mathbf{D}^{(1)}$.
Then,
\begin{align}
    \Pr\left(\exists i,j\in [n],\: i\neq j,H^{(1)}_i={H}^{(1)}_j\right)\to 0 \text{ as }n\to \infty
\end{align}
if $m_n=\omega(n^\frac{4}{|\mathfrak{X}|-1})$.
\end{lem}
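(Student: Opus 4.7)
The plan is to exploit the i.i.d.\ structure of the columns of $\mathbf{D}^{(1)}$ together with a standard birthday-style union bound. Since the entries of $\mathbf{D}^{(1)}$ are i.i.d.\ $\sim p_X$, the column histograms $H_1^{(1)},\dots,H_n^{(1)}$ are themselves i.i.d.\ multinomial$(m_n, p_X)$ random vectors, supported on the $(|\mathfrak{X}|-1)$-dimensional lattice simplex $\{h \in \mathbb{Z}_{\ge 0}^{|\mathfrak{X}|} : \sum_x h(x) = m_n\}$. The first step is to union-bound over the $\binom{n}{2}$ unordered pairs of columns, reducing the problem to controlling the pairwise collision probability
\[
p_{\text{col}} \;\triangleq\; \Pr\!\left(H_1^{(1)} = H_2^{(1)}\right) \;=\; \sum_h \Pr(H = h)^2.
\]

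Next, I would bound $p_{\text{col}}$ using the elementary inequality $\sum_h \Pr(H=h)^2 \le \max_h \Pr(H=h)$, which follows since the multinomial probabilities sum to one. The key quantitative input is then the standard local-CLT / Stirling estimate for multinomials: applying Stirling's formula to $\binom{m_n}{h_1,\dots,h_{|\mathfrak{X}|}} \prod_x p_X(x)^{h_x}$ at the mode $h \approx (m_n p_X(1),\dots,m_n p_X(|\mathfrak{X}|))$ yields $\max_h \Pr(H=h) = \Theta\!\left(m_n^{-(|\mathfrak{X}|-1)/2}\right)$. The exponent $-(|\mathfrak{X}|-1)/2$ is the expected local-CLT scaling for a lattice distribution of intrinsic dimension $|\mathfrak{X}|-1$. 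Combining with the union bound,
\[
\Pr\!\left(\exists\, i\neq j: H_i^{(1)} = H_j^{(1)}\right) \;=\; O\!\left(n^2\, m_n^{-(|\mathfrak{X}|-1)/2}\right),
\]
which tends to $0$ exactly when $n^2 = o\!\left(m_n^{(|\mathfrak{X}|-1)/2}\right)$, i.e., when $m_n = \omega\!\left(n^{4/(|\mathfrak{X}|-1)}\right)$, matching the hypothesis.

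The only genuinely technical step is the bound $\max_h \Pr(H=h) = O(m_n^{-(|\mathfrak{X}|-1)/2})$; this is classical but requires a careful Stirling estimate applied uniformly in the multi-index $h$, or equivalently an appeal to a multivariate local central limit theorem. A minor cleanup point is that $\mathfrak{X}$ should be replaced by the \emph{effective} support $\{x \in \mathfrak{X}: p_X(x) > 0\}$ throughout, since symbols with zero probability never appear in $\mathbf{D}^{(1)}$ and would otherwise artificially inflate the exponent. Beyond this, the argument is a clean second-moment calculation, made possible by the fact that distinct columns of $\mathbf{D}^{(1)}$ are literally independent, so there are no delicate correlations to disentangle.
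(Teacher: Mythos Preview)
Your argument is correct and reaches the right threshold, but it is genuinely more economical than the paper's own proof. Both start from the same union bound $\mu_n \le n^2 \sum_h \Pr(H=h)^2$, but you immediately collapse the sum via $\sum_h \Pr(H=h)^2 \le \max_h \Pr(H=h)$ and then invoke Stirling (or the local CLT) once, at the multinomial mode $h_x \approx m_n p_X(x)$, to obtain $\max_h \Pr(H=h) = O(m_n^{-(|\mathfrak{X}|-1)/2})$. The paper instead applies Stirling termwise to $\binom{m_n}{m_1,\dots,m_{|\mathfrak{X}|}}^2 \prod_x p_X(x)^{2m_x}$, rewrites the summand as $\Pi_{\tilde p}^{-1} 2^{-2 m_n D(\tilde p\| p_X)}$ up to polynomial factors, and then splits the sum over types into a small-KL shell (counted via Pinsker) and a large-KL tail (killed exponentially), balancing through auxiliary sequences $\epsilon_n, V_n, Z_n$. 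Your route avoids all of that bookkeeping at the cost of one additional fact---that the multinomial PMF is maximized near its mean---which is standard. The paper's decomposition, while heavier, makes the type-theoretic structure explicit and would generalize more readily (e.g., to the Markov setting referenced in the remark after the lemma). Your observation about restricting to the effective support $\{x:p_X(x)>0\}$ is also well taken; the paper implicitly assumes full support throughout.
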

\begin{proof}
See Appendix~\ref{proof:histogram}.
\end{proof}

\begin{rem}
    Observe that the order relation derived in Lemma~\ref{lem:histogram} ($m_n=\omega(n^{\nicefrac{4}{|\mathfrak{X}|-1}})$) is better than the one derived in \cite[Lemma 1]{bakirtas2022matching} ($m_n=\omega(n^{4})$), where histograms are \say{collapsed} for tractability in the Markov case. Although the weaker order relation of \cite{bakirtas2022matching} is still satisfied for any positive database growth rate $R>0$, the novel stronger result would be of interest in the zero-rate regime, where $m_n$ is not necessarily exponential in $n$.
\end{rem}

\subsection{Row Matching Scheme and Achievability}\label{subsec:matchingscheme}
We are now ready to prove the achievability part of Theorem~\ref{thm:mainresult}. 
\vspace*{-0.75em}
\begin{sketchproofachievable}
We focus on $\delta\le 1-\hat{q}$ first. For a given pair of matching rows, WLOG, $X_1^n$ of $\mathbf{D}^{(1)}$ and $Y_l^{K_n}$ of $\mathbf{D}^{(2)}$ with $\boldsymbol{\Theta}_n(1)=l$, let $P_e\triangleq \Pr(\hat{\boldsymbol{\Theta}}_n(1)\neq l)$ be the probability of error of the following matching scheme:
\begin{enumerate}[label=\textbf{\arabic*)},leftmargin=1.3\parindent]
\item Construct the histogram vectors ${{H}}_i^{(1)}$ and ${{H}}_j^{(2)}$ as described above, where $K_n=n(1-\delta)$ denotes the column size of $\mathbf{D}^{(2)}$.
\item Check the uniqueness of the columns ${H}^{(1)}_j$ $j\in[n]$ of $\mathbf{H}^{(1)}$. If there are at least two which are identical, declare a \emph{detection error} whose probability is denoted by $\mu_n$. Otherwise, proceed with Step~3.
\item Construct the estimated column deletion pattern $\hat{I}_{\text{del}}$ as described above. Note that conditioned on Step~2, this step is error-free.
\item Obtain $\Tilde{\mathbf{D}}^{(1)}$ from $\mathbf{D}^{(1)}$ by discarding the columns whose indices lie in $\hat{I}_{\text{del}}$. Note that at this step $\Tilde{\mathbf{D}}^{(1)}$ and $\mathbf{D}^{(2)}$ have the same size. 
\item Match the $l$\textsuperscript{th} row $Y^{K_n}_{l}$ of $\mathbf{D}^{(2)}$ with the $1$\textsuperscript{st} row $X^n_1$ of $\mathbf{D}^{(1)}$, assigning $\hat{\boldsymbol{\Theta}}_n(1)=l$ if the $1$\textsuperscript{st} row $\tilde{X}_1^{K_n}$ of $\tilde{\mathbf{D}}^{(1)}$ is the only row of $\tilde{\mathbf{D}}^{(1)}$ equal to $Y^{K_n}_{l}$. Otherwise, declare a \emph{collision error}.
\end{enumerate}
Let $I(\delta)$ be the set of all deletion patterns with $n\delta$ deletions. For the matching rows $X^n_1$, $Y^k_l$ of $\mathbf{D}^{(1)}$ and $\mathbf{D}^{(2)}$, define the pairwise adversarial collision probability between $X^n_1$ and $X^n_i$ for any $i\in[m_n]\setminus\{1\}$ as
\begin{align}
    P_{\text{col,i}}&\triangleq \Pr(\exists \hat{I}_{\text{del}} \in I(\delta):\: {X}_i([n]\setminus\hat{I}_{\text{del}})=Y_l^{K_n})\\
    &=\Pr(\exists \hat{I}_{\text{del}} \in I(\delta):\: {X}_i([n]\setminus\hat{I}_{\text{del}})={X}_1([n]\setminus\hat{I}_{\text{del}})).
\end{align}
where ${X}_i([n]\setminus\hat{I}_{\text{del}})$ is the vector obtained from $X_i^n$ by discarding the elements whose indices lie in $\hat{I}_{\text{del}}$.

Note that the event $\exists \hat{I}_{\text{del}} \in I(\delta):{X}_i([n]\setminus\hat{I}_{\text{del}})={X}_1([n]\setminus\hat{I}_{\text{del}})$ is equivalent to the case when the Hamming distance between $X^n_i$ and $X^n_1$ being upper bounded by $n\delta$. In other words,
\begin{align}
    P_{\text{col,i}} &= \Pr(d_H(X_1^n,X_i^n)\le n\delta) \label{eq:collision}
\end{align}
where $d_H$ denotes the Hamming distance. More formally,
\begin{align}
    d_H(X_1^n,X_i^n) &= \sum\limits_{j=1}^n \mathbbm{1}_{[X_{1,j}\neq X_{i,j}]}
\end{align}
Due to the \emph{i.i.d.} nature of the database elements, $d_H(X_1^n,X_i^n)\sim \text{Binom}(n,1-\hat{q})$, where $\hat{q}=\sum_{x\in\mathfrak{X}} p_X(x)^2$. Thus, for any $\delta\le 1-\hat{q}$, using Chernoff bound~\cite[Lemma 4.7.2]{ash2012information}, we have
\begin{align}
    P_{\text{col,i}} &= \Pr(d_H(X_1^n,X_i^n)\le n\delta)\\
    &\le 2^{-n D(\delta\|1-\hat{q})}\label{eq:advchernoff}
\end{align}

Thus, given the correct labeling for $Y^k_l\in\mathbf{D}^{(2)}$ is $X^n_1\in\mathbf{D}^{(1)}$, the probability of error $P_e$ can be bounded as
\begin{align}
    P_e  &\le \Pr(\exists i\in[m_n]\setminus\{1\}: \tilde{X}_i^{K_n}=\tilde{X}_1^{K_n})\\
    &\le \sum\limits_{i=2}^{2^{n R}} P_{col,i}+ \mu_n\\
    &\le  2^{n R} P_{\text{col,2}}+\mu_n\label{eq:Perowwiseadv}
\end{align}
where \eqref{eq:Perowwiseadv} follows from the fact the the rows are \emph{i.i.d.} and thus $P_{\text{col,i}}=P_{\text{col,2}},\:\forall i\in[m_n]\setminus\{1\}$. Combining \eqref{eq:advchernoff}-\eqref{eq:Perowwiseadv}, we get
\begin{align}
    P_e &\le  2^{n R} \Pr(d_H(X_1^n,X_i^n)\le n\delta) + \mu_n\\
    &\le 2^{n R} 2^{-n D(\delta\|1-\hat{q})}+ \mu_n\\
    &= 2^{-n \left[D(\delta\|1-\hat{q})-R\right]}+\mu_n
\end{align}
By Lemma~\ref{lem:histogram}, $\mu_n\to0$ as $n\to \infty$. Thus, we argue that any rate $R$ satisfying 
\begin{align}
    R&<D(\delta\|1-\hat{q})
\end{align}
is achievable. The rest of the proof trivially follows from the non-negativity of achievable database growth rate for any ${\delta\ge 1-\hat{q}}$.
\end{sketchproofachievable}

We stress that the use of a rowwise matching scheme after the deletion detection phase instead of matching at the database level does not cause a performance loss in terms of achieving the adversarial matching capacity, as we prove in Section~\ref{sec:converse}.
\section{Converse}
\label{sec:converse}
In this section, we show that the achievable rate derived in Section~\ref{sec:mainresult} is in fact tight, by proving a tight upper bound on the all achievable database growth rates and in turn on the adversarial matching capacity $C^{\text{adv}}(\delta)$.
\begin{proofconverse}
Let $R$ be the database growth rate, $\delta$ be the deletion budget of the adversary and $P_e$ be the probability that the scheme is unsuccessful for a uniformly-selected row, WLOG $X_1^n$. In other words, let $P_e\triangleq \Pr(\hat{\boldsymbol{\Theta}}_n(1)\neq \boldsymbol{\Theta}_n(1))\to 0$ as $n\to\infty$. Then, recalling~\eqref{eq:collision}, we have
\begin{align}
    P_e &= \Pr(\exists i\in[m_n]\setminus\{1\}: d_H(X_1^n,X_i^n)\le n\delta)\\
    &= 1 - \Pr(\forall i\in[m_n]\setminus\{1\}: d_H(X_1^n,X_i^n)> n\delta)\label{eq:advconverse1}\\
    &= 1 -\prod\limits_{i=2}^{m_n} \Pr(d_H(X_1^n,X_i^n)> n\delta)\\
    &=  1 -\prod\limits_{i=2}^{m_n} [1-\Pr(d_H(X_1^n,X_i^n)\le n\delta)]\\
    &= 1-[1-\Pr(d_H(X_1^n,X_2^n)\le n\delta)]^{m_n-1} \label{eq:advconverse2}
\end{align}
where \eqref{eq:advconverse1}-\eqref{eq:advconverse2} follow from the fact that the rows of $\mathbf{D}^{(1)}$ are \emph{i.i.d.} Since $D_{n,2}\sim\text{Binom}(n,1-\hat{q})$, for ${\delta\le 1-\hat{q}}$, from~\cite[Lemma 4.7.2]{ash2012information}, we obtain
\begin{align}
    \Pr(D_{n,2}\le n\delta) &\ge \frac{2^{-n D(\delta\|1-\hat{q})}}{\sqrt{2n}}\label{eq:advChernoffLB}
\end{align}
Plugging \eqref{eq:advChernoffLB} into \eqref{eq:advconverse2}, we get
\begin{align}
    P_e&\ge 1-\left[1-\frac{2^{-n D(\delta\|1-\hat{q})}}{\sqrt{2n}}\right]^{m_n-1}
\end{align}

Now let $y=-\frac{2^{-n D(\delta\|1-\hat{q})}}{\sqrt{2n}}\in(-1,0)$. Then, we get
\begin{align}
    P_e&\ge 1-(1+y)^{m_n-1}
\end{align}
Since $y\ge -1$, and $m_n\in\mathbb{N}$, we have
\begin{align}
    1+y(m_n-1)&\le (1+y)^{m_n-1}\le e^{y (m_n-1)}\label{eq:bernoulli}
\end{align}
where the LHS of \eqref{eq:bernoulli} follows from Bernoulli's inequality~\cite[Theorem 1]{brannan2006first} and the RHS of \eqref{eq:bernoulli} follows from the fact that
\begin{align}
    \forall x\in\mathbb{R},\hspace{1em} \forall r\in\mathbb{R}_{\ge0} \hspace{1em} (1+x)^r &\le e^{x r}
\end{align}
Thus, we get
\begin{align}
    P_e&\ge 1-(1+y)^{m_n-1}\\
    &\ge 1-e^{y (m_n-1)}\\
    &\ge 0
\end{align}
since $y<0$, $m_n-1>0$. Note that since $P_e\to 0$, by the Squeeze Theorem~\cite[Theorem 2]{brannan2006first}, we have
\begin{align}
    \lim\limits_{n\to\infty} 1-e^{y (m_n-1)}&\to 0
\end{align}
This, in turn, implies $y m_n\to0$ since the exponential function is continuous everywhere. In other words,
\begin{align}
    \lim\limits_{n\to\infty}& -\frac{2^{-n D(\delta\|1-\hat{q})}}{\sqrt{2n}} m_n 
    \to0
\end{align}
Equivalently, from the continuity of the logarithm function, we get
\begin{align}
    \lim\limits_{n\to\infty}&-n D(\delta\|1-\hat{q})+\log m_n - \frac{1}{2}\log (2 n)\to - \infty\\
    \lim\limits_{n\to\infty}&-n \left[D(\delta\|1-\hat{q})-\frac{1}{n}\log m_n+\frac{\log (2 n)}{2n}\right]\to-\infty\\
    \lim\limits_{n\to\infty}& \left[D(\delta\|1-\hat{q})-\frac{1}{n}\log m_n+\frac{\log (2 n)}{2n}\right]\ge 0
\end{align}
This implies
\begin{align}
        D(\delta\|1-\hat{q})&\ge \lim\limits_{n\to\infty} \frac{1}{n}\log m_n\\
        &= R
\end{align}
finishing the proof for $\delta\le 1-\hat{q}$. Thus, combining with the achievability result of Section~\ref{subsec:matchingscheme}, we have showed that
\begin{align}
    C^{\text{adv}}(\delta)&=D(\delta\|1-\hat{q})
\end{align}
for $\delta\le 1-\hat{q}$. 

We argue that for $\delta>1-\hat{q}$, the adversarial matching capacity is zero, by using two facts: \emph{i)} Since any increase in the adversarial deletion budget hinders matching, the adversarial matching capacity satisfies
\begin{align}
    C^{\text{adv}}(\delta)&\le C^{\text{adv}}(\delta^\prime),\hspace{1em}\forall \delta^\prime\le \delta
\end{align}
and \emph{ii)} $C^{\text{adv}}(1-\hat{q})=0$. Thus, $\forall \delta>1-\hat{q}$, $C^{\text{adv}}(\delta)=0$. This finishes the proof.
\end{proofconverse}

\section{Conclusion}
\label{sec:conclusion}
In this work, we have investigated the database matching problem under adversarial column deletions. We have showed that, similar to the random repetitions setting, column histograms could be used to detect the column deletion pattern. Then, we proposed an exact sequence matching algorithm and derived an achievable database growth rate. Finally, we proved that this achievable database growth rate is in fact tight and thus obtained a complete single-letter characterization of the adversarial matching capacity. Comparing adversarial and random matching capacities, we showed that the adversarial matching capacity is significantly lower than the random matching capacity. Furthermore, we observed that when the deletion probability/budget exceeds a threshold, which is based on the database distribution, the adversarial matching capacity becomes zero, while the random matching capacity is strictly positive. Overall, our results show that adopting an adversarial privacy mechanism, instead of random sampling, can hinder the matching of two correlated databases, providing insight into privacy-preserving publication of user microdata.

\typeout{}
\bibliographystyle{IEEEtran}
\bibliography{references}

\begin{thebibliography}{10}
\providecommand{\url}[1]{#1}
\csname url@samestyle\endcsname
\providecommand{\newblock}{\relax}
\providecommand{\bibinfo}[2]{#2}
\providecommand{\BIBentrySTDinterwordspacing}{\spaceskip=0pt\relax}
\providecommand{\BIBentryALTinterwordstretchfactor}{4}
\providecommand{\BIBentryALTinterwordspacing}{\spaceskip=\fontdimen2\font plus
\BIBentryALTinterwordstretchfactor\fontdimen3\font minus
  \fontdimen4\font\relax}
\providecommand{\BIBforeignlanguage}[2]{{%
\expandafter\ifx\csname l@#1\endcsname\relax
\typeout{** WARNING: IEEEtran.bst: No hyphenation pattern has been}%
\typeout{** loaded for the language `#1'. Using the pattern for}%
\typeout{** the default language instead.}%
\else
\language=\csname l@#1\endcsname
\fi
#2}}
\providecommand{\BIBdecl}{\relax}
\BIBdecl

\bibitem{ohm2009broken}
P.~Ohm, ``{Broken Promises of Privacy: Responding to the Surprising Failure of
  Anonymization},'' \emph{UCLA L. Rev.}, vol.~57, p. 1701, 2009.

\bibitem{bigdata}
J.~Sedayao, R.~Bhardwaj, and N.~Gorade, ``{Making Big Data, Privacy, and
  Anonymization Work Together in the Enterprise: Experiences and Issues},'' in
  \emph{2014 IEEE International Congress on Big Data}, 2014, pp. 601--607.

\bibitem{naini2015you}
F.~M. {Naini}, J.~{Unnikrishnan}, P.~{Thiran}, and M.~{Vetterli}, ``{Where You
  Are Is Who You Are: User Identification by Matching Statistics},'' \emph{IEEE
  Trans. Inf. Forensics Security}, vol.~11, no.~2, pp. 358--372, 2016.

\bibitem{datta2012provable}
A.~Datta, D.~Sharma, and A.~Sinha, ``{Provable De-anonymization of Large
  Datasets with Sparse Dimensions},'' in \emph{International Conference on
  Principles of Security and Trust}.\hskip 1em plus 0.5em minus 0.4em\relax
  Springer, 2012, pp. 229--248.

\bibitem{narayanan2008robust}
A.~{Narayanan} and V.~{Shmatikov}, ``{Robust De-anonymization of Large Sparse
  Datasets},'' in \emph{Proc. of IEEE Symposium on Security and Privacy}, 2008,
  pp. 111--125.

\bibitem{sweeney1997weaving}
L.~Sweeney, ``{Weaving Technology and Policy Together to Maintain
  Confidentiality},'' \emph{The Journal of Law, Medicine \& Ethics}, vol.~25,
  no. 2-3, pp. 98--110, 1997.

\bibitem{takbiri2018matching}
N.~Takbiri, A.~Houmansadr, D.~L. Goeckel, and H.~Pishro-Nik, ``{Matching
  Anonymized and Obfuscated Time Series to Users’ Profiles},'' \emph{IEEE
  Transactions on Information Theory}, vol.~65, no.~2, pp. 724--741, 2019.

\bibitem{sanfeliu2002graph}
A.~Sanfeliu, R.~Alqu{\'e}zar, J.~Andrade, J.~Climent, F.~Serratosa, and
  J.~Verg{\'e}s, ``{Graph-based representations and techniques for image
  processing and image analysis},'' \emph{Pattern recognition}, vol.~35, no.~3,
  pp. 639--650, 2002.

\bibitem{galstyan2021optimal}
T.~Galstyan, A.~Minasyan, and A.~Dalalyan, ``{Optimal detection of the feature
  matching map in presence of noise and outliers},'' \emph{arXiv preprint
  arXiv:2106.07044}, 2021.

\bibitem{zhu2021robust}
B.~Zhu, S.~Chen, Y.~Bai, H.~Chen, N.~Mukherjee, G.~Vazquez, D.~R. McIlwain,
  A.~Tzankov, I.~T. Lee, M.~S. Matter \emph{et~al.}, ``{Robust Single-cell
  Matching and Multi-modal Analysis Using Shared and Distinct Features Reveals
  Orchestrated Immune Responses},'' \emph{bioRxiv}, 2021.

\bibitem{tran2020benchmark}
H.~T.~N. Tran, K.~S. Ang, M.~Chevrier, X.~Zhang, N.~Y.~S. Lee, M.~Goh, and
  J.~Chen, ``{A benchmark of batch-effect correction methods for single-cell
  RNA sequencing data},'' \emph{Genome biology}, vol.~21, no.~1, pp. 1--32,
  2020.

\bibitem{blazewicz2002dna}
J.~B{\l}a{\.z}ewicz, P.~Formanowicz, M.~Kasprzak, P.~Schuurman, and G.~J.
  Woeginger, ``{DNA Sequencing, Eulerian Graphs, and the Exact Perfect Matching
  Problem},'' in \emph{International Workshop on Graph-Theoretic Concepts in
  Computer Science}.\hskip 1em plus 0.5em minus 0.4em\relax Springer, 2002, pp.
  13--24.

\bibitem{cullina}
D.~{Cullina}, P.~{Mittal}, and N.~{Kiyavash}, ``{Fundamental Limits of Database
  Alignment},'' in \emph{Proc. of IEEE International Symposium on Information
  Theory (ISIT)}, 2018, pp. 651--655.

\bibitem{shirani8849392}
F.~{Shirani}, S.~{Garg}, and E.~{Erkip}, ``{A Concentration of Measure Approach
  to Database De-anonymization},'' in \emph{Proc. of IEEE International
  Symposium on Information Theory (ISIT)}, 2019, pp. 2748--2752.

\bibitem{dai2019database}
O.~E. Dai, D.~Cullina, and N.~Kiyavash, ``{Database Alignment with Gaussian
  Features},'' in \emph{The 22nd International Conference on Artificial
  Intelligence and Statistics}.\hskip 1em plus 0.5em minus 0.4em\relax PMLR,
  2019, pp. 3225--3233.

\bibitem{kunisky2022strong}
D.~Kunisky and J.~Niles-Weed, ``{Strong recovery of geometric planted
  matchings},'' in \emph{Proceedings of the 2022 Annual ACM-SIAM Symposium on
  Discrete Algorithms (SODA)}.\hskip 1em plus 0.5em minus 0.4em\relax SIAM,
  2022, pp. 834--876.

\bibitem{tamir2022joint}
R.~Tamir, ``{Joint Correlation Detection and Alignment of Gaussian
  Databases},'' \emph{arXiv preprint arXiv:2211.01069}, 2022.

\bibitem{zeynepdetecting2022}
Z.~K and B.~Nazer, ``{Detecting Correlated Gaussian Databases},'' in \emph{2022
  IEEE International Symposium on Information Theory (ISIT)}, 2022, pp.
  2064--2069.

\bibitem{bakirtas2021database}
S.~Bakirtas and E.~Erkip, ``{Database Matching Under Column Deletions},'' in
  \emph{Proc. of IEEE International Symposium on Information Theory (ISIT)},
  2021, pp. 2720--2725.

\bibitem{bakirtas2022matching}
------, ``{Matching of Markov Databases Under Random Column Repetitions},'' in
  \emph{2022 56th Asilomar Conference on Signals, Systems, and Computers},
  2022.

\bibitem{bakirtas2022seeded}
------, ``{Seeded Database Matching Under Noisy Column Repetitions},'' in
  \emph{2022 IEEE Information Theory Workshop (ITW)}, 2022.

\bibitem{chen2022one}
S.~Chen, S.~Jiang, Z.~Ma, G.~P. Nolan, and B.~Zhu, ``{One-Way Matching of
  Datasets with Low Rank Signals},'' \emph{arXiv preprint arXiv:2204.13858},
  2022.

\bibitem{2627}
I.~Csiszar and P.~Narayan, ``The capacity of the arbitrarily varying channel
  revisited: positivity, constraints,'' \emph{IEEE Transactions on Information
  Theory}, vol.~34, no.~2, pp. 181--193, 1988.

\bibitem{8849568}
B.~Kumar~Dey, S.~Jaggi, M.~Langberg, A.~D. Sarwate, and C.~Wang, ``{The
  Interplay of Causality and Myopia in Adversarial Channel Models},'' in
  \emph{2019 IEEE International Symposium on Information Theory (ISIT)}, 2019,
  pp. 1002--1006.

\bibitem{kash2011zero}
I.~A. Kash, M.~Mitzenmacher, J.~Thaler, and J.~Ullman, ``{On the Zero-Error
  Capacity Threshold for Deletion Channels},'' in \emph{2011 Information Theory
  and Applications Workshop}.\hskip 1em plus 0.5em minus 0.4em\relax IEEE,
  2011, pp. 1--5.

\bibitem{5205859}
M.~Langberg, S.~Jaggi, and B.~K. Dey, ``{Binary causal-adversary channels},''
  in \emph{2009 IEEE International Symposium on Information Theory}, 2009, pp.
  2723--2727.

\bibitem{bassily2014causal}
R.~Bassily and A.~Smith, ``{Causal Erasure Channels},'' in \emph{Proceedings of
  the twenty-fifth annual ACM-SIAM symposium on Discrete algorithms}.\hskip 1em
  plus 0.5em minus 0.4em\relax SIAM, 2014, pp. 1844--1857.

\bibitem{cover2006elements}
T.~M. Cover, \emph{{Elements of Information Theory}}.\hskip 1em plus 0.5em
  minus 0.4em\relax John Wiley \& Sons, 2006.

\bibitem{cheraghchi2020overview}
M.~Cheraghchi and J.~Ribeiro, ``{An Overview of Capacity Results for
  Synchronization Channels},'' \emph{IEEE Transactions on Information Theory},
  vol.~67, no.~6, pp. 3207--3232, 2021.

\bibitem{ash2012information}
R.~B. Ash, \emph{{Information Theory}}.\hskip 1em plus 0.5em minus 0.4em\relax
  Courier Corporation, 2012.

\bibitem{brannan2006first}
D.~A. Brannan, \emph{{A First Course in Mathematical Analysis}}.\hskip 1em plus
  0.5em minus 0.4em\relax Cambridge University Press, 2006.

\bibitem{cormen2022introduction}
T.~H. Cormen, C.~E. Leiserson, R.~L. Rivest, and C.~Stein, \emph{{Introduction
  to Algorithms}}.\hskip 1em plus 0.5em minus 0.4em\relax MIT press, 2022.

\end{thebibliography}
    
\appendix
\subsection{Proof of Lemma~\ref{lem:histogram}}\label{proof:histogram}
For brevity, we let 
\begin{align}
    \mu_n\triangleq \Pr(\exists i,j\in [n],\: i\neq j,H^{(1)}_i=H^{(1)}_j).
\end{align}
Notice that since the entries of $\mathbf{D}^{(1)}$ are \emph{i.i.d.}, $H^{(1)}_i$ are \emph{i.i.d.} Multinomial$(m_n,p_X)$ random variables. Then,
\begin{align}
    \mu_n&\le n^2 \Pr(H^{(1)}_1=H^{(1)}_2)\\
        &=n^2 \sum\limits_{h^{|\mathfrak{X}|}} \Pr(H^{(1)}_1=h^{|\mathfrak{X}|})^2
\end{align}
where the sum is over all vectors of length $|\mathfrak{X}|$, summing up to $m_n$. Let $m_i\triangleq h(i)$, $\forall i\in\mathfrak{X}$. Then,
\begin{align}
    \Pr(H^{(1)}_1=h^{|\mathfrak{X}|})&= \binom{m_n}{m_1,m_2,\dots,m_{|\mathfrak{X}|}} \prod\limits_{i=1}^{|\mathfrak{X}|} p_X(i)^{m_i}
\end{align}
Hence, we have
\begin{align}
    \mu_n &\le n^2\sum\limits_{m_1+\dots+m_{|\mathfrak{X}|}=m_n} \binom{m_n}{m_1,m_2,\dots,m_{|\mathfrak{X}|}}^2 \prod\limits_{i=1}^{|\mathfrak{X}|} p_X(i)^{2 m_i}\label{eq:multinomial}
\end{align}
where $\smash{\binom{m_n}{m_1,m_2,\dots,m_{|\mathfrak{X}|}}}$ is the multinomial coefficient corresponding to the $|\mathfrak{X}|$-tuple $(m_1,\dots,m_{|\mathfrak{X}|})$ and the summation is over all possible non-negative indices $m_1,\dots,m_{|\mathfrak{X}|}$ which add up to $m_n$.

From~\cite[Theorem 11.1.2]{cover2006elements}, we have
\begin{align}
    \prod\limits_{i=1}^{|\mathfrak{X}|} p_X(i)^{2 m_i}=2^{-2m_n(H(\Tilde{p})+D(\Tilde{p}\|p_X))}\label{eq:covertype}
\end{align}
where $\Tilde{p}$ is the type corresponding to $|\mathfrak{X}|$-tuple ${(m_1,\dots,m_{|\mathfrak{X}|})}$: 
\begin{align}
    \Tilde{p}&=\left(\frac{m_1}{m_n},\dots,\frac{m_{|\mathfrak{X}|}}{m_n}\right)
\end{align}
From Stirling's approximation~\cite[Chapter 3.2]{cormen2022introduction}, we get
\begin{align}
    \binom{m_n}{m_1,m_2,\dots,m_{|\mathfrak{X}|}}^2\le \frac{e^2} {(2\pi)^{|\mathfrak{X}|}} m_n^{1-|\mathfrak{X}|} \Pi_{\Tilde{p}}^{-1} 2^{2m_n H(\Tilde{p})}\label{eq:stirling}
\end{align}
where $\Pi_{\Tilde{p}}=\prod_{i=1}^{|\mathfrak{X}|} \Tilde{p}(i)$.

Combining \eqref{eq:multinomial}-\eqref{eq:stirling}, we get
\begin{align}
    \mu_n\le \frac{e^2} {(2\pi)^{|\mathfrak{X}|}} n^2 m_n^{1-|\mathfrak{X}|} \sum\limits_{\Tilde{p}} \Pi_{\Tilde{p}}^{-1} 2^{-2m_n D(\Tilde{p}\|p_X)}
\end{align}

Let \begin{align}
    T=\sum\limits_{\Tilde{p}} \Pi_{\Tilde{p}}^{-1} 2^{-2m_n D(\Tilde{p}\|p_X)} = T_1 + T_2
\end{align} where
\begin{align}
    T_1&=\sum\limits_{\Tilde{p}:D(\Tilde{p}\|p_X)> \frac{\epsilon_n^2}{2\log_e 2}} \Pi_{\Tilde{p}}^{-1} 2^{-2m_n D(\Tilde{p}\|p_X)}\label{eq:T1iid}\\
    T_2&=\sum\limits_{\Tilde{p}:D(\Tilde{p}\|p_X)\le\frac{\epsilon_n^2}{2\log_e 2}} \Pi_{\Tilde{p}}^{-1} 2^{-2m_n D(\Tilde{p}\|p_X)}.\label{eq:T2iid}
\end{align}
Here, $\epsilon_n$, which is described below in more detail, is a small positive number decaying with $n$.

First, we look at $T_2$. From Pinsker's inequality~\cite[Lemma 11.6.1]{cover2006elements}, we have
\begin{align}
    D(\Tilde{p}\|p_X)\le \frac{\epsilon_n^2}{2\log_e 2}\Rightarrow \text{TV}(\Tilde{p},p_X)\le \epsilon_n
\end{align}
where TV denotes the total variation distance. Therefore
\begin{align}
    \left|\{\Tilde{p}:D(\Tilde{p}\|p_X)\le \frac{\epsilon_n^2}{2\log_e}\}\right|&\le |\{\Tilde{p}:\text{TV}(\Tilde{p},p_X)\le \epsilon_n\}|\notag \\
    &= O(m_n^{|\mathfrak{X}|-1}\epsilon_n^{|\mathfrak{X}|-1})
\end{align}
where the last equality follows from the fact in a type we have $|\mathfrak{X}|-1$ degrees of freedom, since the sum of the $|\mathfrak{X}|$-tuple $(m_1,\dots,m_{|\mathfrak{X}|})$ is fixed. 
Furthermore, when $\text{TV}(\Tilde{p},p_X)\le \epsilon_n$, we have 
\begin{align}
    \Pi_{\Tilde{p}} &\ge \prod\limits_{i=1}^{|\mathfrak{X}|} (p_X(i)-\epsilon_n)\ge \Pi_{p_X}-\epsilon_n \sum\limits_{i=1}^{|\mathfrak{X}|} \prod\limits_{j\neq i} p_X(j)
\end{align}
Hence
\begin{align}
    \Pi_{\Tilde{p}}^{-1}&\le \frac{1}{\Pi_{p_X}-\epsilon_n \sum\limits_{i=1}^{|\mathfrak{X}|} \prod\limits_{j\neq i} p_X(j)}
\end{align}
and 
\begin{align}
    T_2 &\le \frac{1}{\Pi_{p_X}-\epsilon_n \sum\limits_{i=1}^{|\mathfrak{X}|} \prod\limits_{j\neq i} p_X(j)} O(m_n^{|\mathfrak{X}|-1}\epsilon_n ^{|\mathfrak{X}|-1})\\
    &= O(m_n^{|\mathfrak{X}|-1}\epsilon_n^{|\mathfrak{X}|-1})
\end{align}
for small $\epsilon_n$.
    
Now, we look at $T_1$. Note that since $m_i\in \mathbb{Z}_+$, we have ${\Pi_{\Tilde{p}}\le m_n^{|\mathfrak{X}|}}$, suggesting the multiplicative term in the summation in~\eqref{eq:T1iid} is polynomial with $m_n$. If $m_i=0$ we can simply discard it and return to Stirling's approximation with the reduced number of categories. Furthermore, from~\cite[Theorem 11.1.1]{cover2006elements}, we have
\begin{align}
    \left|\{\Tilde{p}:D(\Tilde{p}\|p_X)> \frac{\epsilon_n^2}{2\log_e 2}\}\right|&\le |\{\Tilde{p}\}|\\&\le (m_n+1)^{|\mathfrak{X}|}
\end{align}
suggesting the number of terms which we take the summation over in~\eqref{eq:T1iid} is polynomial with $m_n$ as well. Therefore, as long as ${m_n \epsilon_n^2\to\infty}$, $T_1$ has a polynomial number of elements which decay exponentially with $m_n$. Thus
\begin{align}
    T_1\to0\text{ as }n\to\infty\label{eq:t1iid}
\end{align}
    
Define \begin{align}
    U_i&=e^2 (2\pi)^{-|\mathfrak{X}|} m_n^{1-|\mathfrak{X}|} T_i,\quad i=1,2\label{eq:ui}
\end{align} and choose ${\epsilon_n=m_n^{-\frac{1}{2}} V_n}$ for some $V_n$ satisfying ${V_n=\omega(1)}$ and ${V_n=o(m_n^{1/2})}$. Thus, $U_1$ vanishes exponentially fast since ${m_n\epsilon_n^2=V_n^2\to\infty}$ and \begin{align}
    U_2&=O(\epsilon_n^{|\mathfrak{X}|-1})=O(m_n^{(1-|\mathfrak{X}|)/2} V_n^{(|\mathfrak{X}|-1)}).\label{eq:u2}
\end{align}
Combining \eqref{eq:t1iid}-\eqref{eq:u2}, we have 
\begin{align}
    U=U_1+U_2=O(m_n^{(1-|\mathfrak{X}|)/2} V_n^{(|\mathfrak{X}|-1)})
\end{align}
and we get 
\begin{align}
    \mu_n\le n^2 O(m_n^{(1-|\mathfrak{X}|)/2} V_n^{(|\mathfrak{X}|-1)})
\end{align}
By the assumption ${m=\omega(n^\frac{4}{|\mathfrak{X}|-1})}$, we have ${m_n=n^\frac{4}{|\mathfrak{X}|-1} Z_n}$ for some $Z_n$ satisfying  ${\lim\limits_{n\to\infty} Z_n=\infty}$. Now, taking ${V_n=o(Z_n^{1/2})}$ (e.g.~$V_n=Z_n^{1/3}$), we get
\begin{align}
    \mu_n&\le O(n^2 n^{-2} Z_n^{(1-|\mathfrak{X}|)/2} V_n^{(|\mathfrak{X}|-1)})
    = o(1)
\end{align}
Thus $m=\omega(n^\frac{4}{|\mathfrak{X}|-1})$ is enough to have $\mu_n\to0$ as $n\to\infty$.\qed
\end{document}